\DeclareDocumentCommand\setdef{mo}{\left\{#1\IfNoValueTF{#2}{}{ \mid #2}\right\}}
\DeclareDocumentCommand\conv{o}{\operatorname{conv}\IfValueTF{#1}{\left(#1\right)}{}}
\DeclareDocumentCommand\xc{o}{\operatorname{xc}\IfValueTF{#1}{\left(#1\right)}{}}
\DeclareDocumentCommand\R{}{\mathbb{R}}
\DeclareDocumentCommand\Z{}{\mathbb{Z}}
\DeclareDocumentCommand\matroid{o}{\mathcal{M}\IfValueT{#1}{_{#1}}}
\DeclareDocumentCommand\orderO{m}{\mathcal{O}\left(#1\right)}
\DeclareDocumentCommand\zerovec{o}{\IfNoValueTF{#1}{\mathbb{O}}{\mathbb{O}_{#1}}}
\DeclareDocumentCommand\transpose{m}{#1^{\intercal}}
\DeclareDocumentCommand\subgraphPoly{o}{P\IfValueT{#1}{^{#1}}_{\mathrm{sub}}(G)}
\DeclareDocumentCommand\martinPoly{}{P^\star_{\mathrm{sub}}(G)}
\DeclareDocumentCommand\spfPoly{}{P_{\mathrm{sp.forests}}(G)}
\DeclareDocumentCommand\onevec{o}{\IfNoValueTF{#1}{\mathbbm{1}}{\mathbbm{1}_{#1}}}
\DeclareDocumentCommand\countmatroid{}{\mathcal{M}_{m,\ell}(G)}
\DeclareDocumentCommand\sparsitymatroid{}{\mathcal{M}_{k,\ell}(G)}
\DeclareDocumentCommand\matroid{}{\mathcal{M}}
\DeclareDocumentCommand\sperner{}{\mathcal{T}}
\newtheorem{thm}{Theorem}
\newtheorem{prop}[thm]{Proposition}
\title{Subgraph Polytopes and Independence Polytopes of Count Matroids}
\author[1]{Michele Conforti}
\author[2]{Volker Kaibel}
\author[2]{Matthias Walter}
\author[2]{Stefan Weltge}
\affil[1]{Universit\'{a} di Padova: conforti@math.unipd.it}
\affil[2]{Otto-von-Guericke-Universität Magdeburg: \{kaibel,matthias.walter,weltge\}@ovgu.de}
\date{\today}
\begin{document}

\maketitle


\begin{abstract}
Given an undirected graph, the non-empty subgraph
polytope is the convex hull of the characteristic vectors of pairs $(F,S)$ where $S$ is a non-empty subset of nodes
and $F$ is a subset of the edges with both endnodes in $S$.
We obtain a strong relationship between the non-empty subgraph polytope and the spanning forest polytope.
We further show that these polytopes provide polynomial size extended formulations for independence polytopes of count
matroids,
which generalizes recent results obtained by Iwata et al.~\cite{IwataKKKO14} referring to sparsity matroids.
As a byproduct, we obtain new lower bounds on the extension complexity of the spanning forest polytope in terms of
extension complexities of independence polytopes of these matroids.
\end{abstract}


\section{Introduction}

Given an undirected graph $ G = (V,E) $, let us define the \emph{subgraph polytope} of $ G $ as
\[
    \subgraphPoly := \conv \setdef{(\chi(F), \chi(S)) \in \{0,1\}^E \times \{0,1\}^V}[F \subseteq E(S), \, S \subseteq
    V],
\]
where $ \chi(\cdot) $ denotes the characteristic vector and $ E(S) $ denotes the set of edges with both endnodes in $ S
$.
A system of valid linear inequalities whose set of feasible integer points coincides with
the set of integer points in $\subgraphPoly$ is given by
\begin{align}
    \label{eq:trivial1}
    0 \le z_v \le 1 \quad & \forall \, v \in V \\
    \label{eq:trivial2}
    0 \le y_{\{v,w\}} \le z_v \quad & \forall \, \{v,w\} \in E \,.
\end{align}
Let $A$ be the matrix describing system \eqref{eq:trivial1}, \eqref{eq:trivial2}.
Since $A$ has at most one $+1$ and one $-1$ in each row, $A$ is totally unimodular and hence
\begin{multline*}
  \subgraphPoly = \conv\{(\chi(F), \chi(S)) \in \{0,1\}^E \times \{0,1\}^V : \\
                  (\chi(F), \chi(S))\mbox{ satisfies } \eqref{eq:trivial1}, \eqref{eq:trivial2}\} \,.
\end{multline*}

In what follows, we will consider a certain type of subpolytopes (i.e., convex hulls of subsets of
vertices) of $ \subgraphPoly $ and show how they can be used to construct extended formulations for
independence polytopes of count matroids (see Section~\ref{sec:applications}) defined on $ G $.
Let $ \sperner $ be a family of subsets of nodes in $ V $ and consider the convex hull of vertices of $ \subgraphPoly $
that only come from sets $ S $ that contain at least one member of $ \sperner $, i.e., the polytope
\begin{equation}
    \label{eq:typesubpolytopes}
    \subgraphPoly[\sperner] := \conv \setdef{(\chi(F), \chi(S))}[F \subseteq E(S), \, T \subseteq S \subseteq V
    \text{ for some } T \in \sperner].
\end{equation}
For each $ T \in \sperner $, let $ Q_T $ be the face of $ \subgraphPoly $ that is defined by $ x_v = 1$ for all $v \in T $, we clearly
have $ \subgraphPoly[\sperner] = \conv \big( \bigcup_{T \in \sperner} Q_T \big) $.
Except for some discussion at the end of this paper, we will only be concerned with the choice $ \sperner^{\star} = \setdef{\{v\}}[v \in V] $ and the polytope
\[
    \martinPoly := \subgraphPoly[\sperner^{\star}] = \setdef{(\chi(F), \chi(S))}[F \subseteq E(S), \, \emptyset \ne S
    \subseteq V],
\]
which we call the \emph{non-empty subgraph polytope} of $ G $.
Note that $ \martinPoly $ is the convex hull of the vertices of $\subgraphPoly$ that are distinct from $(\zerovec,\zerovec)$.

In this note, we are  interested in  \emph{extended formulations} of polyhedra.
An extended formulation of a polyhedron $ P $ is a system of linear inequalities $ Ax + By \le b $
and equations $ Cx + Dy = d $ such that 
$ P = \setdef{x}[\exists y: Ax + By \le b ,\, Cx + Dy = d] $.
As usual, we only count the number of inequalities, which we define to be the \emph{size} of the extended formulation.
The \emph{extension complexity} $ \xc[P] $ of a polyhedron $ P $ is defined as the smallest size of an extended
formulation for $ P $.
Since  $ \subgraphPoly[\sperner] = \conv \big( \bigcup_{T \in \sperner} Q_T \big) $ and $ \subgraphPoly $
is defined by inequalities \eqref{eq:trivial1} and \eqref{eq:trivial2},
by Balas' extended formulation for the union of polyhedra~\cite{Balas79} we obtain
\begin{equation}
    \label{eq:xcsubpolytopes}
    \xc[\subgraphPoly[\sperner]] \le \sum_{T \in \sperner} \xc(Q_T) + 1 \le \orderO{|\sperner| \cdotp (|V| + |E|)}.
\end{equation}
\medskip

In the first part of this note, we will show a strong relationship between $ \martinPoly $ and the \emph{spanning forest
polytope} of $ G $, which is the convex hull of characteristic vectors of (edge-sets of) forests in $ G $ with the same connected components as $G$ and will be
denoted by $ \spfPoly $.
In Section~\ref{sec:martin}, we revisit the work of Martin~\cite{Martin91}, which shows that any extended formulation
for~$ \martinPoly $ can be transferred into one for~$ \spfPoly $ of nearly the same size.
In Section~\ref{sec:description}, we will provide a complete description of~$ \martinPoly $ in the original space,
which, to our surprise, shows that the converse holds as well:
Any extended formulation for~$ \spfPoly $ can be transferred into one for~$ \martinPoly $ of nearly the
same size and hence the asymptotic growths of their extension complexities coincide.

As our second contribution, we will show in Section~\ref{sec:applications} that $ \martinPoly $ can be used to obtain
polynomial size extended formulations for independence polytopes of special types of count matroids.
Using another subpolytope of type~\eqref{eq:typesubpolytopes}, we will even be able to give polynomial size extended
formulations for the whole class of count matroids.

Recently, Iwata~et~al.~\cite{IwataKKKO14} showed the existence of polynomial size extended formulations for independence
polytopes of \emph{sparsity matroids}, a subclass of count matroids.
They employed a technique developed in~\cite{FaenzaFGT12} and designed a randomized communication protocol (exchanging
only few bits) that computes the slack matrix of these polytope in expectation.
This approach defines an extended formulation only implicitly.
It probably would be a rather tedious task to explicitly derive an extended formulation from that protocol, which consequently is not done in~\cite{IwataKKKO14}.

In our note, we are able to give polynomial bounds on the extension complexities of independence polytopes of count
matroids. In the special case of sparsity matroids, these bounds match the ones obtained in~\cite{IwataKKKO14}.
Our proof technique is completely different and allows to easily work out explicit extended formulations.
In addition, we are even able to improve upon the bounds given in~\cite{IwataKKKO14} in some cases if the underlying
graph is planar.

Although independence polytopes of matroids are arguably well-understood
there are only a few classes of independence polytopes of matroids for which polynomial size extended formulations are known.
On the negative side, Roth\-voss~\cite{Rothvoss13} showed that there exists a family of independence polytopes of matroids
whose extension complexities grow exponentially in their dimension.


\section{Revisiting Martin's construction}
\label{sec:martin}
In this section, we review Martin's~\cite{Martin91} construction of an extended formulation for spanning forest
polytopes in a slightly more abstract manner as given in his paper.
We start with one of his key observations.

\begin{prop}
    \label{prop:separation}
   Given a non-empty polyhedron $Q$ and $\gamma \in \R$, let
    \[
        P = \setdef{x}[\langle x,y \rangle \le \gamma \quad \forall \, y \in Q] \,.
    \]
    If $ Q = \setdef{y}[\exists z: Ay + Bz \le b] $,  we have that
    \[
        P = \setdef{x}[\exists \lambda \ge \zerovec: \transpose{A}\lambda = x, \, \transpose{B}\lambda = \zerovec, \,
        \langle b,\lambda \rangle \le \gamma]
    \]
    holds and hence $ \xc[P] \le \xc[Q] + 1 $.
\end{prop}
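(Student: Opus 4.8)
The plan is to read $P$ as the set of objective vectors for which a linear program over $Q$ has value at most $\gamma$, and then to pass to the LP dual using the given extended formulation of $Q$. Concretely, observe that $x \in P$ holds exactly when $\max_{y \in Q} \langle x,y \rangle \le \gamma$. Since $Q = \setdef{y}[\exists z: Ay + Bz \le b]$, this maximum equals the value of the linear program $\max \setdef{\langle x,y \rangle}[Ay + Bz \le b]$ in the variables $(y,z)$, in which the coordinate $z$ carries zero objective coefficient. Writing the objective as $(x,\zerovec)$ and the constraint matrix as $[A\ B]$, the dual of this program is $\min \setdef{\langle b,\lambda \rangle}[\transpose{A}\lambda = x,\ \transpose{B}\lambda = \zerovec,\ \lambda \ge \zerovec]$, where the block equation $\transpose{B}\lambda = \zerovec$ arises precisely because $z$ has zero objective coefficient. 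Its feasible region is exactly the set of $\lambda$ appearing on the right-hand side of the claimed identity.

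With this setup I would establish the two inclusions separately. For ``$\supseteq$'' I would use only weak duality: if some $\lambda \ge \zerovec$ satisfies $\transpose{A}\lambda = x$, $\transpose{B}\lambda = \zerovec$ and $\langle b,\lambda \rangle \le \gamma$, then $\lambda$ is dual feasible, so $\langle x,y \rangle \le \langle b,\lambda \rangle \le \gamma$ for every $y \in Q$, whence $x \in P$. For ``$\subseteq$'' I would invoke strong duality: if $x \in P$, then the primal value is at most $\gamma$; being feasible with value bounded above by $\gamma$, the primal attains its optimum and strong duality furnishes a dual feasible $\lambda$ with $\langle b,\lambda \rangle \le \gamma$, which is precisely a witness for the right-hand side.

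The one point requiring care is not the algebra of dualizing but the correct application of duality in the second inclusion: strong duality may be invoked only because the primal is feasible, and it is exactly here that the hypothesis $Q \ne \emptyset$ enters, since a point of $Q$ yields a feasible $(y,z)$; combined with the bound $v \le \gamma$ this makes the primal feasible and bounded, so that its optimum is attained and matched by the dual. Once the set identity is in hand, the extension-complexity bound is immediate: taking $Ay + Bz \le b$ to be a minimum-size extended formulation of $Q$, the $\lambda$-variables are indexed by its $\xc[Q]$ inequalities, the constraints $\transpose{A}\lambda = x$ and $\transpose{B}\lambda = \zerovec$ are equations and so do not count toward the size, the sign constraints $\lambda \ge \zerovec$ contribute $\xc[Q]$ inequalities, and the single scalar inequality $\langle b,\lambda \rangle \le \gamma$ contributes one more, giving $\xc[P] \le \xc[Q] + 1$.
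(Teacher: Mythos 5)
Your proposal is correct and follows exactly the paper's argument: recast membership in $P$ as the condition $\max\setdef{\langle x,y\rangle}[\exists z: Ay+Bz\le b]\le\gamma$ and apply LP duality to the extended formulation, with the equation $\transpose{B}\lambda=\zerovec$ coming from the zero objective on $z$. Your additional care in splitting weak/strong duality and noting where $Q\ne\emptyset$ guarantees primal feasibility only makes explicit what the paper's one-line invocation of strong duality leaves implicit.
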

\begin{proof}
    A point $ \bar{x} $ is contained in $ P $ if and only if
    \[
        \max \, \setdef{\langle \bar{x}, y \rangle}[\exists z: Ay + Bz \le b] \le \gamma,
    \]
    which by strong duality is equivalent to the existence of dual multipliers $ \lambda \ge \zerovec $ such that $ \transpose{A}\lambda =
    \bar{x} $, $ \transpose{B}\lambda = \zerovec $, and $ \langle b,\lambda \rangle \le \gamma $ hold.
\end{proof}

\noindent
Let $ G = (V,E) $ be an undirected graph and let $ \nu(G) $ be the number of its connected components.
Edmonds~\cite{Edmonds71} shows that $\spfPoly $, the spanning forest polytope of $ G $,
equals the set of points $x \in \R_+^E$ satisfying
$x(E) = |V| - \nu(G)$ and $ x(F) \le |S| - 1$ for all $F \subseteq E(S)$ with $\emptyset \ne S \subseteq V$.
Alternatively, a point $ x \in \R^E_+ $ with $ x(E) = |V| - \nu(G) $ is contained in $ \spfPoly $ if and only if
\[
    \big\langle (x, -\onevec_V) , (\chi(F), \chi(S)) \big\rangle \le -1 \quad \forall \, F \subseteq E(S), \, \emptyset \ne S \subseteq V
\]
holds and hence
\begin{align*}
    \spfPoly = \{ x \in \R^E_+ \ \mid \ & x(E) = \nu(G),\\
    & \big\langle (x, -\onevec_V) , (y,z) \big\rangle \le -1 \quad \forall \, (y,z) \in \martinPoly \}.
\end{align*}
Since $\subgraphPoly$ is defined by the system \eqref{eq:trivial1} and \eqref{eq:trivial2}, by Proposition~\ref{prop:separation} and Inequality~\eqref{eq:xcsubpolytopes} we obtain
\begin{equation}
    \label{eq:forestslemartin}
    \xc[\spfPoly] \le \xc[\martinPoly] + |E| + 1 \le \orderO{|V|(|V| + |E|)}.
\end{equation}
It is an open question whether this bound on the extension complexity of the spanning forest polytope is tight for
general graphs.
One might ask whether the bound given in~\eqref{eq:xcsubpolytopes} is best possible in the case of~$ \martinPoly $.
Note that the extended formulation for~$ \martinPoly $ behind Inequality~\eqref{eq:xcsubpolytopes} is a special case of
those constructed in~\cite{AnguloADK13}, where the general problem of removing vertices from polytopes is investigated.
Clearly, any construction yielding an asymptotically smaller extension for~$ \martinPoly $ would imply an improved upper
bound on the extension complexity of the spanning forest polytope.
In the next section, we will see that also the converse holds.


\section{Description of the non-empty subgraph polytope}
\label{sec:description}

In this section, we first give a complete description of~$ \martinPoly $ in the original space:
\begin{thm}
    \label{thm:subpolyouter}
    For an undirected graph $ G = (V,E) $ we have
    \begin{multline*}
        \martinPoly = \subgraphPoly\cap \{ (y,z) \in \R^E \times \R^V \mid \\
        y(F) \le z(V) - 1 \quad \forall \, F \subseteq E \text{ spanning forest} \}.
    \end{multline*}
\end{thm}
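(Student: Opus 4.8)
The plan is to establish the two inclusions separately; the inclusion ``$\subseteq$'' is routine and the reverse inclusion carries all the content. For ``$\subseteq$'', note that $\martinPoly \subseteq \subgraphPoly$ holds trivially, since $\martinPoly$ is the convex hull of a subset of the vertices of $\subgraphPoly$. To see that each inequality $y(F) \le z(V) - 1$ is valid for $\martinPoly$, it suffices to check it on the vertices $(\chi(F_0), \chi(S_0))$ with $\emptyset \neq S_0 \subseteq V$ and $F_0 \subseteq E(S_0)$. For such a vertex and a spanning forest $F$ we have $y(F) = \langle \chi(F_0), \chi(F)\rangle = |F_0 \cap F|$, and $F_0 \cap F$ is an acyclic edge set all of whose endnodes lie in $S_0$; hence it has at most $|S_0| - 1$ edges, which gives $y(F) = |F_0\cap F| \le |S_0| - 1 = z(V) - 1$.

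For the reverse inclusion, let $(y^*,z^*)$ be a point of the right-hand side. The starting point is the \emph{threshold decomposition}: for $t \in [0,1]$ set $S(t) := \setdef{v \in V}[z^*_v \ge t]$ and $F(t) := \setdef{e \in E}[y^*_e \ge t]$. Since $(y^*,z^*) \in \subgraphPoly$ satisfies $y^*_e \le \min\{z^*_u, z^*_w\}$ for every edge $e = \{u,w\}$, we have $F(t) \subseteq E(S(t))$ for all $t$, so each $(\chi(F(t)), \chi(S(t)))$ is an integer point of $\subgraphPoly$, and $(y^*,z^*) = \int_0^1 (\chi(F(t)), \chi(S(t)))\, \mathrm{d}t$ is a finite convex combination (the breakpoints being the finitely many distinct coordinate values). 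Writing $M := \max_{v} z^*_v$, all terms with $t > M$ equal the origin $(\zerovec, \zerovec)$, contributing total weight $1 - M$, while every term with $0 < t \le M$ is a vertex of $\subgraphPoly$ distinct from the origin, and hence a point of $\martinPoly$. Thus the only obstruction to membership in $\martinPoly$ is this weight $1-M$ sitting on the origin, and the task is to redistribute it onto nonzero vertices without changing the represented point.

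The redistribution uses a single \emph{splitting} move: if $C_1, \dots, C_{c(t)}$ are the vertex sets of the connected components of the graph $(S(t), F(t))$, then $\chi(S(t)) = \sum_i \chi(C_i)$ and $\chi(F(t)) = \sum_i \chi(F(t) \cap E(C_i))$, since no edge of $F(t)$ joins distinct components; so one unit of the term $(\chi(F(t)), \chi(S(t)))$ together with $c(t) - 1$ units of the origin can be rewritten as one unit of each nonzero vertex $(\chi(F(t) \cap E(C_i)), \chi(C_i))$. This move is both weight- and value-preserving and moves $c(t)-1$ units of weight off the origin. Consequently the total absorbing capacity of the decomposition is $\int_0^M (c(t) - 1)\,\mathrm{d}t$, and applying the move to a suitable fraction of the terms will consume all of the origin weight provided this capacity is at least $1 - M$, that is, provided $\int_0^M c(t)\,\mathrm{d}t \ge 1$.

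The crux is to evaluate this integral. Using $c(t) = |S(t)| - r(t)$, where $r(t)$ denotes the rank of $F(t)$ in the graphic matroid of $G$, together with $\int_0^1 |S(t)|\,\mathrm{d}t = z^*(V)$ and $S(t) = \emptyset$ for $t > M$, one obtains $\int_0^M c(t)\,\mathrm{d}t = z^*(V) - \int_0^1 r(t)\,\mathrm{d}t$. Now $\int_0^1 r(t)\,\mathrm{d}t$ is the Lovász extension of the rank function evaluated at $y^*$, which by the greedy characterization of matroids equals $\max\setdef{y^*(F')}[F' \text{ a forest}]$; since $y^* \ge \zerovec$, this maximum is attained on a spanning forest. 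Hence $\int_0^M c(t)\,\mathrm{d}t = z^*(V) - \max_{F'} y^*(F')$, and the required bound $\int_0^M c(t)\,\mathrm{d}t \ge 1$ is exactly the hypothesis $y^*(F') \le z^*(V) - 1$ for every spanning forest $F'$. This expresses $(y^*,z^*)$ as a convex combination of nonzero vertices of $\subgraphPoly$, proving $(y^*,z^*) \in \martinPoly$. The main obstacle is precisely this identity linking $\int_0^1 r(t)\,\mathrm{d}t$ to the maximum-weight forest, plus the bookkeeping needed to run the splitting move across all levels simultaneously; the degenerate cases ($(y^*,z^*)=(\zerovec,\zerovec)$, or $G$ edgeless, where the only spanning forest is $\emptyset$) are immediate, since the inequality with $F = \emptyset$ already forces $z^*(V) \ge 1$ and hence $M > 0$.
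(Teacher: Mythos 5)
Your proof is correct, and it takes a genuinely different route from the paper's. The paper argues indirectly via vertex integrality: it notes that the integer points of the right-hand side $Q$ coincide with those of $\martinPoly$, shows that every face of $Q$ given by $z_v = 1$ coincides with the corresponding face of $\subgraphPoly$ (hence is integral), and then rules out a vertex $(y,z)$ of $Q$ with $z_v < 1$ for all $v$ by exhibiting a second solution of its tight system: double all coordinates and subtract $1$ on a connected component $(V_\alpha,E_\alpha)$ of the edges of maximum $y$-value, the key point being that any tight spanning forest, being a maximum $y$-weight spanning forest, must (by Kruskal's algorithm) use exactly $|V_\alpha|-1$ edges inside that component, so the forest equations stay tight; this contradicts uniqueness of the vertex as solution of its tight system. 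You instead construct the convex combination explicitly: the threshold decomposition writes $(y^*,z^*)$ as a convex combination of integer points of $\subgraphPoly$ with mass $1-M$ on the origin, the component-splitting move trades origin mass for nonzero vertices, and your capacity computation reduces the feasibility of this trade, via $c(t)=|S(t)|-r(t)$ and the Lov\'asz-extension/greedy identity $\int_0^1 r(F(t))\,\mathrm{d}t = \max\setdef{y^*(F')}[F' \text{ forest}]$, exactly to the spanning forest inequalities. Your route is constructive and makes transparent why the spanning forest cuts are precisely the missing inequalities, at the cost of invoking matroid greedy/Lov\'asz extension machinery; the paper's route is shorter and purely elementary polyhedral reasoning, but non-constructive. (Interestingly, both hinge on optimality of greedily built forests.) One small slip in your write-up: $\emptyset$ is a spanning forest only when $E=\emptyset$, so to get $z^*(V)\ge 1$ and hence $M>0$ in general you should use an arbitrary spanning forest $F$ together with $y^*\ge\zerovec$, which gives $z^*(V)\ge 1+y^*(F)\ge 1$; this is cosmetic and leaves no gap.
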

\begin{proof}
    Let $ Q $ denote the polytope on the right-hand side of the equation.
    It is easy to check that the inequalities defining $Q$ imply $z(V) \ge 1$
    and are valid for all vertices of $\subgraphPoly$ except the origin.
    Thus, the integer points in $ \martinPoly $ and $ Q $ coincide
    and it suffices to show that $ Q $ has only integer vertices.

    First, suppose that we have a point $ (y,z) $ that satisfies~\eqref{eq:trivial1} and \eqref{eq:trivial2} with $ z_v = 1
    $ for some $ v \in V $.
    Given a (spanning) forest $ F \subseteq E $, inequalities~\eqref{eq:trivial2} together with nonnegativity of $z$ imply
    \[
        y(F) \le z(V \setminus \{v\}) = z(V) - z_v = z(V) - 1.
    \]
    Thus, every face of $ Q $ defined by $ z_v = 1 $ for some $ v \in V $ coincides with the face of $
    \subgraphPoly $ defined by $ z_v = 1 $ and hence has only integer vertices.

    Let $ (y,z) $ be any vertex of $ Q $.
    It remains to show that this implies $ z_v = 1 $ for some $ v \in V $.
    For the sake of contradiction, assume that we have $ z_v < 1 $ for all $ v \in V $.
    By possibly deleting nodes and edges of $ G $, we may assume that we have $ z_v \ge y_{\{v,w\}} > 0 $ for all $
    \{v,w\} \in E $.
    Then $ (y,z) $ is the unique solution of a system
    \begin{align}
        \label{eq:tightedge}
        y_{\{v,w\}} & = z_v & \text{for all } \{v,w\} \in E' \\
        \label{eq:tightforest}
        y(F) & = z(V) - 1 & \text{ for all } F \in \mathcal{F}
    \end{align}
    of linear equations
    for some $E' \subseteq E $ and some non-empty collection $ \mathcal{F} $ of spanning forests.
    Let $ \alpha := \max_{e \in E} y_e $ and set $ \overline{E} := \setdef{e \in E}[y_e = \alpha] $.
    Let $ (V_\alpha, E_\alpha) $ be a connected component of $ (V, \overline{E}) $ containing at least one edge and let us define 
    $(y',z') \in \R^E \times \R^V $ as follows:
    \begin{align*}
        y'_e &:= \begin{cases}
            2 \cdotp y_e & \text{if } e \in E \setminus E(V_\alpha) \\
            2 \cdotp y_e - 1 & \text{if } e \in E(V_\alpha)
        \end{cases}\\
        z'_v &:= \begin{cases}
            2 \cdotp z_v & \text{if } v \in V \setminus V_\alpha \\
            2 \cdotp z_v - 1 & \text{if } v \in V_\alpha
        \end{cases}
    \end{align*}

    As we have $ y_{\{v,w\}} < z_v $ if $ \{v,w\} \notin E_{\alpha} $ and $ v \in V_{\alpha} $, we obtain that $
    (y',z') $ satisfies~\eqref{eq:tightedge}.
    Let $ F^* $ be a spanning forest such that $ y(F^*) = z(V) - 1 $.
    Since $ y(F) \le z(V) - 1$ for every spanning forest $F$, $F^*$ is a spanning forest of maximum $ y $-weight, following Kruskal's algorithm we find  that 
    $ |F^* \cap E_{\alpha}| = |V_{\alpha}| - 1 $ holds. Hence
    \begin{align*}
        y'(F^*) &= 2y(F^*) - (|V_{\alpha}| - 1) \\
        &= 2(z(V) - 1) - (|V_{\alpha}| - 1) \\
        &= 2z(V) - |V_{\alpha}| - 1 \\
        &= z'(V) - 1.
    \end{align*}
      Since  $ 0 < z_v < 1 $ for all $ v \in V $, $(y',z')\ne  (y,z) $.
      Therefore $ (y',z') $ is another solution to the system~\eqref{eq:tightedge}--\eqref{eq:tightforest} and
      this contradicts the fact that \eqref{eq:tightedge}--\eqref{eq:tightforest} defines a vertex of $Q$.
\end{proof}

Using Proposition~\ref{prop:separation}, the above statement implies that every extended formulation for $ \spfPoly $
can be transferred into one for $ \martinPoly $ of essentially the same size.
\begin{thm}
    \label{thm:forestseqmartin}
    The extension complexities of $ \spfPoly $ and $ \martinPoly $ coincide up to an additive term of order $
    \orderO{|V| + |E|} $.
\end{thm}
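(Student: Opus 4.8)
The plan is to establish the two directions of the inequality separately, using the two constructions already at hand in the excerpt. One direction is immediate: Inequality~\eqref{eq:forestslemartin} already records that
\[
    \xc[\spfPoly] \le \xc[\martinPoly] + |E| + 1,
\]
so up to the additive $\orderO{|V| + |E|}$ term, $\spfPoly$ is no harder to describe than $\martinPoly$. The real content of the theorem is the reverse inequality $\xc[\martinPoly] \le \xc[\spfPoly] + \orderO{|V| + |E|}$, and this is where Theorem~\ref{thm:subpolyouter} does the work.

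For the reverse direction, I would exploit the description
\[
    \martinPoly = \subgraphPoly \cap R,
    \quad
    R := \setdef{(y,z)}[y(F) \le z(V) - 1 \ \forall\, F \subseteq E \text{ spanning forest}],
\]
furnished by Theorem~\ref{thm:subpolyouter}. Since $\subgraphPoly$ is described explicitly by the $\orderO{|V| + |E|}$ inequalities \eqref{eq:trivial1} and \eqref{eq:trivial2}, it contributes only to the additive term, and the task reduces to producing a small extended formulation for the system of spanning-forest inequalities cutting out $R$. I would interpret each constraint $y(F) \le z(V) - 1$ as requiring $\langle (y,-\onevec_V),(\chi(F),\chi(S))\rangle \le -1$ in the spirit of the polarity setup preceding \eqref{eq:forestslemartin}; more directly, the inequalities defining $R$ are exactly of the form $\langle (y,z), w \rangle \le \gamma$ ranging over a finite point set whose convex hull is governed by $\spfPoly$. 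Concretely, $\max_{F \text{ sp.\ forest}} y(F) = $ the optimum of a linear function over $\spfPoly$, so the single aggregated condition $\max_{F} y(F) \le z(V) - 1$ is equivalent to the whole family, and this maximum is a linear optimization over $\spfPoly$. Applying Proposition~\ref{prop:separation} with $Q = \spfPoly$ (written via its extended formulation) then yields a formulation for $R$ of size $\xc[\spfPoly] + 1$.

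Assembling the pieces, I would intersect the extended formulation for $R$ just obtained with the explicit description of $\subgraphPoly$: intersection of polyhedra adds inequality counts, so
\[
    \xc[\martinPoly]
    = \xc[\subgraphPoly \cap R]
    \le \xc[\subgraphPoly] + \xc[R]
    \le \orderO{|V| + |E|} + \xc[\spfPoly] + 1.
\]
Combined with \eqref{eq:forestslemartin}, the two bounds sandwich $\xc[\martinPoly]$ and $\xc[\spfPoly]$ within an additive $\orderO{|V| + |E|}$ of one another, which is the claim.

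The step I expect to require the most care is the application of Proposition~\ref{prop:separation} to the spanning-forest inequalities. The proposition is stated for a single constraint $\langle x, y\rangle \le \gamma$ with fixed right-hand side $\gamma$, whereas here the "right-hand side" $z(V) - 1$ is itself a variable of the polytope, and we are varying $x = y$ (the edge-part) against all $F$ simultaneously. The clean way to handle this is to move to the homogenized picture used before \eqref{eq:forestslemartin}: treat $(\chi(F),\chi(S))$ with $S = V$ as the optimization domain and apply Proposition~\ref{prop:separation} to the vector $(y, -\onevec_V)$ with threshold $-1$, so that the variable terms $z(V)$ get absorbed into the inner product rather than the constant $\gamma$. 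Verifying that this substitution produces a correct and polynomially-sized formulation — and that it indeed cuts out precisely $R$ rather than a larger or smaller set — is the one place where I would slow down and check the bookkeeping carefully.
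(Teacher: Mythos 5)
Your proposal is correct and takes essentially the same route as the paper's proof: the forward bound is quoted from~\eqref{eq:forestslemartin}, and the reverse bound combines Theorem~\ref{thm:subpolyouter} with Proposition~\ref{prop:separation} applied after homogenizing so that the variable term $z(V)$ sits inside the inner product, exactly the paper's choice $Q := \spfPoly \times \{-\onevec[V]\}$, $\gamma := -1$, together with the intersection bound using the explicit description of $\subgraphPoly$. Your only slip is notational: the pairing should be of the variable $(y,z)$ against the points $\left(\chi(F), -\onevec[V]\right)$ (rather than a vector $(y,-\onevec[V])$), which is precisely how the paper's $Q$ resolves the issue you flagged, and it costs nothing in size since fixing coordinates adds only equations.
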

\begin{proof}
  By Inequality~\eqref{eq:forestslemartin}, we already have
  $\xc[\spfPoly] \le \xc[\martinPoly] + \orderO{|E|}$.
  Setting
  \begin{align*}
    P &:= \setdef{ (y,z) \in \R^E \times \R^V }[ y(F) - z(V) \leq -1 \quad \forall F \subseteq E \text{ spanning forest} ] \,, \\
    Q &:= \spfPoly \times \setdef{ -\onevec[V] } \,,
  \end{align*}
  and $\gamma := -1$ we obtain
  \begin{align*}
    \xc[\martinPoly] 
      &= \xc[\subgraphPoly \cap P] \\ 
      &\le \xc[\subgraphPoly] + \xc[P] \\
      &\le \xc[\subgraphPoly] + \xc[Q] + 1 \\
      &= \xc[\subgraphPoly] + \xc[\spfPoly] + 1 \\
      &\leq \orderO{ |E| + |V| } + \xc[\spfPoly]
  \end{align*}
  where the first equality follows from Theorem~\ref{thm:subpolyouter},
  the second inequality from Proposition~\ref{prop:separation},
  and the third inequality from the outer description of $\subgraphPoly$
  given by \eqref{eq:trivial1} and \eqref{eq:trivial2}.
\end{proof}


\section{Extended formulations for independence polytopes of count matroids}
\label{sec:applications}

Let $ G = (V,E) $ be an undirected graph.
Given $ \ell \in \Z $, let  $ m \colon V \to \Z_+ $ be a non-negative integer valued function satisfying
\begin{equation}
    \label{eq:countmatroidnontrivial}
    m(v) + m(w) \ge \ell \quad \forall \, \{v,w\} \in E \,.
\end{equation}
Consider the independence system $ \countmatroid $ on ground set $ E $ where a set $ F \subseteq E $ is independent if
and only if
\[
    |F \cap E(S)| \le \max \, \{ m(S) - \ell, 0\},
\]
holds for all $ S \subseteq V $, where $ m(S) = \sum_{v \in S} m(v) $.
Such independence systems can be easily seen to satisfy the matroid axioms and are called \emph{count matroids},
see~\cite{Frank11}.
If we have $ m(v) = k $ for all $ v \in V $ for some $ k \in \Z $, the matroid $ \sparsitymatroid := \countmatroid $ is called a \emph{$
(k,\ell) $-sparsity matroid}.
Note that the $ (1,1) $-sparsity matroid of $ G $ is simply the graphic matroid of $ G $.
A theorem of Nash-Williams~\cite{Nash61} states that the independent sets of the $ (k,k) $-sparsity matroid of $ G $ are
those subsets of edges of~$ E $ that can be partitioned into~$ k $ forests.

In the remainder of this note, we are interested in extended formulations for the \emph{independence polytope} $
P(\matroid) $ of a matroid $ \matroid $, which is defined as the convex hull of characteristic vectors of independent
sets of $ \matroid $.
Recently, Iwata~et~al.~\cite{IwataKKKO14} showed the existence of polynomial size extended formulations for independence
polytopes of $ (k,\ell) $-sparsity matroids.
More precisely, they showed that the extension complexity of $ P(\sparsitymatroid) $ can be bounded by $ \orderO{|V|
\cdotp |E|} $ if $ k \ge \ell $, and by $ \orderO{|V|^2 \cdotp |E|} $ otherwise.
In what follows, we will give bounds on the extension complexities of independence polytopes of count matroids, which
match the ones given in~\cite{IwataKKKO14} for the special case of $ (k,\ell) $-sparsity matroids.

We will distinguish two cases:
In the first case, we assume that $ m, \ell $ satisfy the following additional requirement:
\begin{equation}
    \label{eq:countmatroidsimplecase}
    m(v) \ge \ell \quad \forall \, v \in V
\end{equation}
Note that this case corresponds to the assumption $ k \ge \ell $ in the case of $ (k,\ell) $-sparsity matroids.
The second case deals with the general situation in which~\eqref{eq:countmatroidsimplecase} is not necessarily
satisfied.
\begin{thm}
    \label{thm:countmatroidsimplecase}
    Let $ \countmatroid $ be a count matroid satisfying~\eqref{eq:countmatroidsimplecase}.
    Then we have
    \begin{itemize}
        \item[(a)] $ \xc[P(\countmatroid)] \le \xc[\spfPoly] + \orderO{|V| + |E|} $,
        \item[(b)] $ \xc[P(\countmatroid)] \le \orderO{|V|(|V| + |E|)} $,
        \item[(c)] $ \xc[P(\countmatroid)] \le \orderO{|V| + |E|} $ if $ G $ is planar.
    \end{itemize}
\end{thm}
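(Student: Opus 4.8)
The plan is to first obtain an explicit linear description of $P(\countmatroid)$ and then recognise its inequalities as inner products with the vertices of $\martinPoly$, so that Proposition~\ref{prop:separation} and Theorem~\ref{thm:forestseqmartin} apply directly. The role of hypothesis~\eqref{eq:countmatroidsimplecase} is precisely to remove the truncation at $0$: since $m(S) \ge m(v) \ge \ell$ for every non-empty $S$ and any $v \in S$, the defining condition of $\countmatroid$ simplifies to $|F \cap E(S)| \le m(S) - \ell$. Under this assumption I would first establish the description
\[
    P(\countmatroid) = \setdef{x \in \R^E_+}[x(E(S)) \le m(S) - \ell \quad \forall \, \emptyset \ne S \subseteq V],
\]
i.e.\ that this LP relaxation is integral. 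This is a statement about count matroids having a rank polytope of the expected form, which I would derive from the standard theory of matroids induced by the (modular) set function $S \mapsto m(S) - \ell$; as a sanity check, for the graphic matroid ($m \equiv 1$, $\ell = 1$) it specialises to Edmonds' forest polytope. Restricting to edge sets of the form $E(S)$ rather than arbitrary $A \subseteq E$ is harmless because, for $x \ge \zerovec$, the inequality for a general $A$ is dominated by the one for $E(S)$ with $S$ the set of endnodes of $A$.

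Next I would put these inequalities into the form required by Proposition~\ref{prop:separation}. For $x \ge \zerovec$ and fixed non-empty $S$, imposing $x(F) \le m(S) - \ell$ for all $F \subseteq E(S)$ is equivalent to imposing it only for $F = E(S)$, since $x(F) \le x(E(S))$. Hence the inequalities are exactly those indexed by the vertices $(\chi(F),\chi(S))$ of $\martinPoly$, and
\[
    P(\countmatroid) = \setdef{x \in \R^E_+}[\langle (x,-m),(y,z)\rangle \le -\ell \quad \forall \, (y,z) \in \martinPoly].
\]
Treating $(x,w) \in \R^E \times \R^V$ as a single free variable, Proposition~\ref{prop:separation} with $Q := \martinPoly$ and $\gamma := -\ell$ yields a polytope $\widehat{P} \subseteq \R^E \times \R^V$ with $\xc[\widehat{P}] \le \xc[\martinPoly] + 1$ such that the slice $\setdef{x}[(x,-m) \in \widehat{P}]$ intersected with $\R^E_+$ equals $P(\countmatroid)$. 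Fixing the $V$-coordinates to $-m$ costs only equations, and adding $x \ge \zerovec$ costs $|E|$ inequalities, so $\xc[P(\countmatroid)] \le \xc[\martinPoly] + |E| + 1$. Invoking Theorem~\ref{thm:forestseqmartin} to replace $\xc[\martinPoly]$ by $\xc[\spfPoly] + \orderO{|V|+|E|}$ then gives part~(a).

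Parts~(b) and~(c) follow from~(a) by substituting known bounds on $\xc[\spfPoly]$. For~(b) I would use the general estimate $\xc[\spfPoly] \le \orderO{|V|(|V|+|E|)}$ recorded in~\eqref{eq:forestslemartin}. For~(c) I would use that spanning forest polytopes of planar graphs admit linear-size extended formulations, so $\xc[\spfPoly] \le \orderO{|V|+|E|}$ when $G$ is planar; the additive $\orderO{|V|+|E|}$ term coming from~(a) does not change the order.

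The main obstacle is the very first step: proving the explicit polyhedral description of $P(\countmatroid)$ and isolating where~\eqref{eq:countmatroidsimplecase} is indispensable, namely that it is exactly this hypothesis that eliminates the $\max\{\,\cdot\,,0\}$ and lets the facet directions coincide with the vertices of $\martinPoly$. Once that description is in hand, the remainder is a mechanical application of Proposition~\ref{prop:separation} and Theorem~\ref{thm:forestseqmartin}. I would expect the general case treated afterwards (where~\eqref{eq:countmatroidsimplecase} may fail) to require a different family $\sperner$ in~\eqref{eq:typesubpolytopes}, chosen to absorb the small sets $S$ with $m(S) < \ell$ for which the truncation is active.
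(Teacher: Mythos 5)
Your proposal follows exactly the paper's argument: rewrite the rank inequalities of $\countmatroid$ (which, thanks to~\eqref{eq:countmatroidsimplecase}, need no truncation at $0$) as the inequalities $\big\langle (x,-m(\onevec_V)),(y,z)\big\rangle \le -\ell$ over all $(y,z) \in \martinPoly$, apply Proposition~\ref{prop:separation} to get $\xc[P(\countmatroid)] \le \xc[\martinPoly] + |E| + 1$, then use Theorem~\ref{thm:forestseqmartin} for~(a), the bound~\eqref{eq:forestslemartin} for~(b), and Williams' linear-size extended formulation for planar $\spfPoly$ for~(c). The one step you flag as the main obstacle --- the integral linear description of $P(\countmatroid)$ --- is likewise taken as known in the paper (it is standard for count matroids), so your write-up and the paper's proof coincide in substance.
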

\begin{proof}
    Due to condition~\eqref{eq:countmatroidsimplecase},
    $ P(\countmatroid) $ can be described via
    \[
        P(\countmatroid) = \setdef{x \in \R^E_+}[x(F) \le m(S) - \ell \quad \forall \, F \subseteq E(S), \, \emptyset \ne S
        \subseteq V]
    \]
    or, alternatively,
    \[
        P(\countmatroid) = \setdef{x \in \R^E_+}[
        \big\langle (x, -m(\onevec_V)) , (y,z) \big\rangle \le -\ell \quad \forall \, (y,z) \in \martinPoly],
    \]
    where $ m(\onevec_V) \in \R^V $ is defined via $ m(\onevec_V)_v = m(v) $ for all $ \forall \, v \in V $.
    Thus, via Proposition~\ref{prop:separation} we conclude
    \[
        \xc[P(\countmatroid)] \le \xc[\martinPoly] + |E| + 1,
    \]
    (the summand $ |E| $ being due to the nonnegativity constraints on $ x $), and hence
    \[
        \xc[P(\countmatroid)] \le \xc[\spfPoly] + \orderO{|V| + |E|},
    \]
    by Theorem~\ref{thm:forestseqmartin},
    which shows (a).
    Part (b) then follows from~\eqref{eq:forestslemartin}.
    For planar graphs $ G $, exploiting the relation between spanning trees in planar graphs and their duals,
    Williams~\cite{Williams02}  showed that $ \spfPoly $ admits an extended formulation of size $
    \orderO{|V| + |E|} $ and hence (c) follows from (a) as well.
\end{proof}
\noindent
In the proof of Theorem~\ref{thm:countmatroidsimplecase} we used the fact that the inequalities
$ 0 \le m(v) - \ell \quad \forall \, v \in V $ are valid for $ P(\countmatroid) $ if $ m,\ell $ satisfy~\eqref{eq:countmatroidsimplecase}
and hence were able to use $ \martinPoly $ to describe $ P(\countmatroid) $.
For the general case, we have to make use of another subpolytope of $ \subgraphPoly $ of
type~\eqref{eq:typesubpolytopes}.
\begin{thm}\label{thm:countGeneral}
    Let $ \countmatroid $ be any count matroid.
    Then we have
    \[
        \xc(P(\countmatroid)) \le \orderO{|E| (|V| + |E|)}.
    \]
\end{thm}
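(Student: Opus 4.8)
The plan is to follow the proof of Theorem~\ref{thm:countmatroidsimplecase} as closely as possible, but to replace $\martinPoly$ by a subpolytope of $\subgraphPoly$ of type~\eqref{eq:typesubpolytopes} that only records the node sets which are actually relevant. Recall that in the simple case one could pair $x$ against every vertex $(\chi(F),\chi(S))$ of $\martinPoly$ because condition~\eqref{eq:countmatroidsimplecase} guaranteed $m(S)\ge\ell$ for all non-empty $S$, so that $\max\{m(S)-\ell,0\}=m(S)-\ell$ throughout. Without~\eqref{eq:countmatroidsimplecase} there can be non-empty node sets $S$ with $m(S)<\ell$; for those the vertex $(\zerovec,\chi(S))$ of $\martinPoly$ would produce the inequality $0\le m(S)-\ell$, which is violated. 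This is the only obstruction, and the remedy is to discard exactly these node sets.

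First I would start from the description of $P(\countmatroid)$ analogous to the one used in Theorem~\ref{thm:countmatroidsimplecase}, now with the maximum still present,
\[
    P(\countmatroid)=\setdef{x\in\R^E_+}[x(E(S))\le\max\{m(S)-\ell,0\}\quad\forall\,\emptyset\ne S\subseteq V] \,,
\]
and then use the standing assumption~\eqref{eq:countmatroidnontrivial} to sort its constraints. If $E(S)=\emptyset$ then $x(E(S))=0$ and the inequality is vacuous; if $E(S)\ne\emptyset$ then $S$ contains both endpoints of some edge $\{v,w\}\in E$, so $m(S)\ge m(v)+m(w)\ge\ell$ by~\eqref{eq:countmatroidnontrivial} (using $m\ge\zerovec$) and hence $\max\{m(S)-\ell,0\}=m(S)-\ell$. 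Thus every constraint that matters has the clean linear form $x(E(S))\le m(S)-\ell$ and is indexed by a node set $S$ with $E(S)\ne\emptyset$.

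The key step is to realize precisely these node sets through a polytope of type~\eqref{eq:typesubpolytopes}. Choosing $\sperner:=\setdef{\{v,w\}}[\{v,w\}\in E]$, a node set $S$ contains a member of $\sperner$ if and only if $E(S)\ne\emptyset$, so the vertices of $\subgraphPoly[\sperner]$ are exactly the pairs $(\chi(F),\chi(S))$ with $F\subseteq E(S)$ and $E(S)\ne\emptyset$. Evaluating the functional $(x,-m(\onevec_V))$ at such a vertex gives $x(F)-m(S)$, and letting $F$ range over all subsets of $E(S)$ (the tightest choice being $F=E(S)$) recovers all inequalities collected above, while the vertices with $F=\emptyset$ now only yield the valid inequalities $0\le m(S)-\ell$. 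Hence
\[
    P(\countmatroid)=\setdef{x\in\R^E_+}[\big\langle(x,-m(\onevec_V)),(y,z)\big\rangle\le-\ell\quad\forall\,(y,z)\in\subgraphPoly[\sperner]] \,.
\]

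Finally I would conclude exactly as in Theorem~\ref{thm:countmatroidsimplecase}: Proposition~\ref{prop:separation} gives $\xc[P(\countmatroid)]\le\xc[\subgraphPoly[\sperner]]+|E|+1$, the summand $|E|$ accounting for the nonnegativity of $x$, and Inequality~\eqref{eq:xcsubpolytopes} together with $|\sperner|\le|E|$ yields $\xc[\subgraphPoly[\sperner]]\le\orderO{|E|(|V|+|E|)}$, which proves the claim. I expect the main difficulty to be conceptual rather than computational: identifying the family $\sperner$ whose spanned node sets are exactly $\setdef{S}[E(S)\ne\emptyset]$, and checking via~\eqref{eq:countmatroidnontrivial} that restricting to these sets simultaneously discards only vacuous inequalities and removes the troublesome maximum from every surviving one. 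Once this is in place, the invocation of Proposition~\ref{prop:separation} and the counting bound are routine.
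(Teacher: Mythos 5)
Your proposal is correct and matches the paper's own proof: the family $\sperner=\setdef{\{v,w\}}[\{v,w\}\in E]$ you construct is exactly the polytope $\subgraphPoly[E]$ the paper uses, and the conclusion via Proposition~\ref{prop:separation} together with the Balas bound~\eqref{eq:xcsubpolytopes} is identical. You merely spell out in more detail why~\eqref{eq:countmatroidnontrivial} lets one drop the maximum and discard the node sets with $E(S)=\emptyset$, which the paper leaves implicit.
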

\begin{proof}
    Let us consider the polytope
    \[
        \subgraphPoly[E] = \conv \setdef{(\chi(F), \chi(S))}[F \subseteq E(S), \, e \subseteq S \subseteq V,
        \, e \in E].
    \]
    By Inequality~\eqref{eq:xcsubpolytopes}, we have $ \xc[\subgraphPoly[E]] \le \orderO{|E| (|V| + |E|)} $.
    Due to~\eqref{eq:countmatroidnontrivial},
    $ P(\countmatroid) $ can be described via
    \[
        P(\countmatroid) = \setdef{x \in \R^E_+}[x(F) \le m(S) - \ell \quad \forall \, F \subseteq E(S), \, e \subseteq S
        \subseteq V, \, e \in E]
    \]
    or, alternatively,
    \[
        P(\countmatroid) = \setdef{x \in \R^E_+}[
        \big\langle (x, -m(\onevec_V)) , (y,z) \big\rangle \le -\ell \quad \forall \, (y,z) \in \subgraphPoly[E]].
    \]
    and hence the claim follows from Proposition~\ref{prop:separation}.
\end{proof}

In contrast to the polytope $ \martinPoly $, from computer experiments it seems that the polytope $ \subgraphPoly[E] $ used in the proof of Theorem~\ref{thm:countGeneral} has a very complicated facet structure. In fact, we do not even have a conjecture how an inequality description in the original space could look like.


\bibliographystyle{abbrv}
\bibliography{references}

\end{document}